\documentclass[conference]{IEEEtran}
\IEEEoverridecommandlockouts

\usepackage{cite}
\usepackage{amsmath,amssymb,amsfonts}
\usepackage{amsthm}
\usepackage{graphicx}
\usepackage{textcomp}
\usepackage{xcolor}
\usepackage{enumitem}

\usepackage[caption=false,font=footnotesize]{subfig}

\usepackage{booktabs}
\usepackage{algorithm}
\usepackage{algorithmic}
\usepackage{etoolbox}
\usepackage{bm}

\theoremstyle{plain}
\newtheorem{theorem}{Theorem}

\newtheorem{proposition}{Proposition}

\theoremstyle{definition}
\newtheorem{definition}{Definition}

\theoremstyle{remark}
\newtheorem{remark}{Remark}

\def\BibTeX{{\rm B\kern-.05em{\sc i\kern-.025em b}\kern-.08em
    T\kern-.1667em\lower.7ex\hbox{E}\kern-.125emX}}

\begin{document}

\title{Design of Outage-Limit-Approaching Protograph LDPC Codes via Generalized Rootchecks
\thanks{
This work was partly supported by Institute of Information \& communications Technology Planning \& Evaluation (IITP) grant funded by the Korea government (MSIT) (RS-2024-00398449, RS-2024-00397216) and Basic Science Research Program through the National Research Foundation of Korea (NRF) funded by the Ministry of Science and ICT (RS-2024-00343913) and the Ministry of Education (RS-2023-00247197). }
} 

\author{
Inki Kim$^{1}$,
Hyuntae Ahn$^{1}$,
Yongjune Kim$^{3}$,
Hee-Youl Kwak$^{2,\dagger}$,
Dae-Young Yun$^{2}$,
and Sang-Hyo Kim$^{1,\dagger}$%
\thanks{$^{\dagger}$Corresponding authors.}\\[0.3em]
$^{1}$Dept. of Electrical and Computer Engineering, Sungkyunkwan University, Suwon, South Korea\\
Email: \{inkikim37@gmail.com, dks1408@g.skku.edu, iamshkim@skku.edu\}\\
$^{2}$Dept. of Electrical, Electronic and Computer Engineering, University of Ulsan, Ulsan, South Korea\\
Email: \{ghy1228@gmail.com, dyyun95@gmail.com\}\\
$^{3}$Dept. of Electrical Engineering, POSTECH, Pohang, South Korea\\
Email: yongjune@postech.ac.kr
}

\maketitle

\begin{abstract}

This paper presents a new protograph-based LDPC code design framework that simultaneously achieves full diversity over block-fading channels (BFCs) and near-capacity performance over additive white Gaussian noise channels. By leveraging a Boolean approximation-based analysis—Diversity Evolution—we derive structural constraints with generalized rootchecks that guarantee full diversity. Building on these constraints, we propose a diversity-aligned protograph template tailored for the two-block BFC ($M=2$) that ensures full diversity under iterative belief propagation decoding. Furthermore, a genetic algorithm guided by density evolution is employed to optimize the protograph edges within this family for improved coding gain. The resulting codes, termed DA-GRP-LDPC codes, simultaneously achieve full diversity and enhanced coding gain, reaching a 0.8~dB gap to the outage limit for the two-block BFC at a block length of 16,896. This demonstrates that the proposed framework effectively bridges the gap between diversity optimality in non-ergodic channels and high coding gain in ergodic channels.
\end{abstract}

\begin{IEEEkeywords}
protograph, block-fading channels, iterative BP decoding, genetic algorithm, LDPC codes
\end{IEEEkeywords}

\section{Introduction}

Low-Density Parity-Check (LDPC) codes~\cite{Gallager62,Richardson01a} are sparse-graph linear block codes that achieve capacity-approaching performance for large block lengths.
Belief-propagation (BP) decoding is particularly well-suited for practical implementation~\cite{Chen05}.
Consequently, LDPC codes have been adopted in a wide range of communication standards such as Digital Video Broadcasting (DVB)-S2/C2/T2, IEEE 802.11 wireless local area network, Advanced Television Systems Committee (ATSC)~3.0, and 3rd Generation Partnership Project (3GPP) 5th Generation-New Radio (5G-NR)~\cite{NR38212}.

Most existing LDPC code designs are optimized for ergodic channel models such as AWGNCs~\cite{Richardson01a,Chung00,Divsalar09}.
This aligns with modern wireless standards, where the codes are typically optimized for AWGNCs and fading effects are addressed through a bit-interleaved coded modulation (BICM) scheme~\cite{NR38211,NR38212}.

In parallel, a substantial body of research has investigated coding techniques for nonergodic channel models, most notably block-fading channels (BFCs)~\cite{Kraidy10,Boutros10,Fang15,FangSurvey15,cKim19,cKim20,Ju22,Ju25,Ahn26}.
In this setting, performance is governed by diversity rather than ergodic capacity~\cite{MalkamakiLeib99,KnoppHum00,Guillen06}.
To address this challenge, root LDPC and root-protograph LDPC codes have been developed to achieve diversity-optimal performance over BFCs~\cite{Boutros10,Fang15,FangSurvey15}.
More recently, generalized-root-protograph (GRP) LDPC codes have extended this framework to higher-rate codes and hybrid automatic repeat request (HARQ) scenarios~\cite{cKim19,cKim20}.

Despite achieving optimal diversity over BFCs, existing root and GRP LDPC codes rely on highly constrained protograph structures, which limit coding gain and lead to suboptimal AWGNC performance.
This raises a fundamental question: Can a single LDPC code simultaneously achieve diversity-optimal performance over BFCs and strong coding gain over AWGNCs and possibly BFCs?
For a BFC with $M=2$ blocks, we propose a protograph-based LDPC design framework that exploits a special graph structure called a generalized rootcheck, thereby achieving diversity optimality while providing enhanced design flexibility for strong coding gain in both AWGNCs and BFCs. We refer to the new codes as diversity-aligned generalized rootcheck protograph LDPC (DA-GRP-LDPC) codes.

The contributions of this work are as follows.
First, we propose a protograph template where full diversity is aligned to all information nodes for the two-block BFC that induces a constrained protograph family and guarantees optimal diversity under iterative BP decoding.
Second, within the proposed template, we employ a genetic algorithm~\cite{Holland75,Elkelesh19} to optimize the protograph for strong coding gain.
As a result, the optimized full-diversity code achieves performance within 0.163~dB of the Shannon limit. Notably, the proposed codes achieve state-of-the-art performance among LDPC codes over the two-block BFC, operating within 0.8~dB of the outage limit at a block length of 16,896.

% As a result, the optimized full diversity code achieves a gap to capacity of 0.163~dB. At a block length of 16896, our optimized code operates within 0.8~dB gap of the outage limit for the two-block BFC. 

\section{Preliminaries}

\subsection{Low-Density Parity-Check Codes}
LDPC codes are linear block codes defined by a sparse parity-check matrix
$\mathbf{H}$ and an equivalent Tanner-graph representation~\cite{Gallager62,Tanner81}.
The Tanner graph consists of variable nodes (VNs) and check nodes (CNs), with an
edge between VN $v_i$ and CN $c_j$ if $H_{ji}=1$. This representation enables
iterative message-passing decoding, commonly referred to as BP decoding~\cite{Chen05}. In practice, low-complexity
approximations of BP, such as min-sum (MS) decoding, are widely used.
Under MS decoding, VN updates sum the channel log-likelihood ratios (LLRs) and incoming CN messages,
whereas CN updates apply sign multiplication and a minimum-magnitude rule to the incoming VN messages~\cite{Chen05}.

\subsection{Protograph Ensembles and QC-LDPC Codes}
In modern LDPC design, quasi-cyclic (QC) LDPC codes are widely used and are constructed by lifting a small bipartite graph, called a protograph~\cite{NR38212}.
Its corresponding matrix representation is referred to as the base matrix.
Let $\mathbf{H}\in\mathbb{F}_2^{(N-K)\times N}$ denote the parity-check matrix of a QC-LDPC code.
In protograph-based design, $\mathbf{H}$ is specified by a base matrix
$\mathbf{H}_p\in\mathbb{F}_2^{(n-k)\times n}$ associated with a protograph
$\mathbb{G}_p(\mathcal{V},\mathcal{C},\mathcal{E})$, where
$\mathcal{V}=\{v_0,\ldots,v_{n-1}\}$ and $\mathcal{C}=\{c_0,\ldots,c_{n-k-1}\}$,
together with lifting information such as cyclic shifts.

Finite-length LDPC codes are constructed by lifting a given protograph with a lifting factor $Z$.
Specifically, each nonzero entry of $\mathbf{H}_p$ is replaced by a $Z\times Z$ circulant permutation matrix, and each zero entry is replaced by a $Z\times Z$ all-zero matrix.
The resulting $\mathbf{H}\in\mathbb{F}_2^{(N-K)\times N}$, where $N=nZ$ and $K=kZ$, exhibits a QC structure and thus defines a QC-LDPC code.
The circulant structure facilitates memory-efficient and hardware-friendly implementations~\cite{Myung05}.

From an ensemble perspective, a QC-LDPC code can be viewed as a finite-length realization of a protograph-based ensemble.
As the lifting factor $Z$ increases, the code performance concentrates around the ensemble average, motivating asymptotic analysis via density evolution (DE)~\cite{Richardson01b,Chung00}.
To evaluate the decoding thresholds with low complexity, we adopt reciprocal channel approximation-based protograph DE (RCA-DE)~\cite{Chung01,Jang22}, and use the resulting threshold as the fitness metric in the proposed genetic algorithm.

\section{Diversity Analysis of LDPC Codes under BP Decoding in Block-Fading Channels}

\subsection{BFC Model and Coded Diversity}
We consider an $M$-block BFC with binary phase-shift keying (BPSK), where $M$
denotes the number of fading blocks~\cite{KnoppHum00,MalkamakiLeib99,Guillen06}. A codeword is
partitioned into $M$ subblocks as $\mathbf{c}=(\mathbf{c}_0,\mathbf{c}_1,\ldots,\mathbf{c}_{M-1})$,
where $\mathbf{c}_m$ is transmitted over the $m$-th fading block. Each subblock
$\mathbf{c}_m$ is modulated to $\mathbf{s}_m$, and the $m$-th block is
modeled as $\mathbf{r}_m = h_m\mathbf{s}_m + \mathbf{n}_m$ for $m=0,\ldots,M-1$,
where $\mathbf{s}_m$ and $\mathbf{r}_m$ denote the transmitted and received
symbol vectors, respectively, $\mathbf{n}_m$ is AWGN, and
$h_m\sim\mathcal{CN}(0,1)$ is constant within the block and independent across
blocks~\cite{KnoppHum00,MalkamakiLeib99}. Although only BPSK is considered in this paper, the results can be readily extended to higher-order modulations.

For transmission over BFCs, achieving diversity across fading blocks is
essential for reliable communication. Since BFCs are nonergodic, performance is
evaluated in terms of the block error rate (BLER) and its high-SNR slope, which reflects
the diversity order~\cite{MalkamakiLeib99}. The achieved
diversity is subject to a Singleton-like bound~\cite{Guillen06}; in particular, for
$M=2$, the maximum rate achieving full diversity is $R=1/2$. This
bound is closely related to the blockwise Hamming distance (i.e., the blockwise
structure of codewords)~\cite{Guillen06}.

In AWGNCs, performance is closely related to Hamming distance and coding gain,
whereas over BFCs it is governed by the blockwise Hamming distance. As a result,
LDPC codes optimized for AWGNCs do not necessarily achieve full diversity over
BFCs, and vice versa~\cite{Ju22,Ju25}. Motivated by this mismatch, we focus on the
$M=2$ case and develop an LDPC design framework that achieves both full diversity
over BFCs and strong performance over AWGNCs.

\begin{figure}[t]
\centering
\subfloat[CN update: AND\label{fig:CN_update}]{%
  \includegraphics[width=0.17\textwidth]{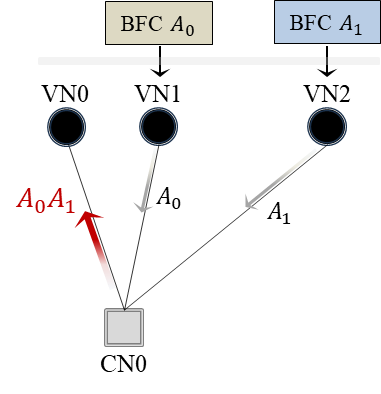}%
}
\hfil
\subfloat[VN update: OR\label{fig:VN_update}]{%
  \includegraphics[width=0.17\textwidth]{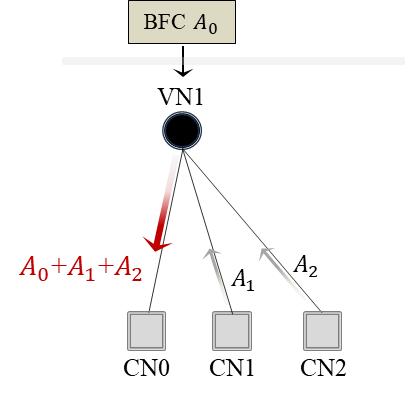}%
}
\caption{Boolean CN/VN updates in DivE (symbolic operations).}
\label{fig:boolean-fading}
\vspace{-0.11in}
\end{figure}

\subsection{Diversity Analysis of LDPC Codes under BP Decoding}
\label{subsec:dive_tool}

We introduce \emph{Diversity Evolution (DivE)}~\cite{Ahn26} to analyze the diversity behavior of protograph-based LDPC codes over BFCs under iterative BP decoding.
We first employ a Boolean approximation $A_m$ of the fading coefficient $h_m$ for each block $m\in\{0,\ldots,M-1\}$ by thresholding the fading/nonfading state, $A_m=\mathbf{1}_{|h_m|^2\gamma \ge \rho_0}$, where $\rho_0$ is a threshold and $\mathbf{1}_E$ denotes the indicator function of an event $E$~\cite{Ju22,Ju25}. 
Let $\mathbf{A}=(A_0,\ldots,A_{M-1})\in\{0,1\}^M$ denote the set of Boolean fading variables, where $A_m$ is the Boolean approximation of the $m$-th block.

DivE proceeds by tracking the evolution of message-level fading behavior through iterative BP decoding via Boolean CN/VN updates (Fig.~\ref{fig:boolean-fading}), as summarized in Algorithm~\ref{alg:DivE}. These updates are symbolic: DivE iteratively updates Boolean algebraic expressions in $\mathbf{A}$,
which we refer to as \emph{Boolean fading functions}. These functions capture the associated diversity behavior~\cite{Ahn26}.

Although implementing Boolean algebra can be tricky for large $M$, DivE can be efficiently derived by running numerical computation of the fading result of MS decoding for all possible individual fading states ($2^M$ cases). This function, referred to as $\mathrm{fadingMSD}$~\cite{Ahn26}, returns a binary vector of length $n$ representing the fading states of the VNs after MS decoding.

Eventually, DivE results in Boolean fading functions (or the truth tables) corresponding to $n$ VNs, respectively.
The diversity order of each VN is determined directly from its final Boolean fading function~\cite{Ju25}. If all information VNs achieve full diversity, then any QC-LDPC code obtained by lifting the protograph can achieve full diversity over BFCs. % 여기에 썼음 정보비트 복호 가능 어쩌고

\begin{algorithm}[t]
\caption{Diversity evolution over a protograph~\cite{Ahn26}}
\label{alg:DivE}
\begin{algorithmic}[1]
\small
\REQUIRE $\mathbb{G}_p$, $\mathbf{A}$, block mapping ${\pi(\cdot)}\in\{0,\ldots,M-1\}^n$. %fading $\mathbf{A}$
\ENSURE Boolean fading functions of VNs
\STATE \textit{// Initialization}
\STATE Assign $A_{\pi(i)}$ to the VN $v_i$ for all $i$.
\STATE \textit{// Main loop}
\FOR{$\ell = 1,\ldots,\ell_{\max}$}
    \STATE Check node update for all CNs as in Fig.~\ref{fig:CN_update}.
    \STATE Variable node update for all VNs as in Fig.~\ref{fig:VN_update}.
\ENDFOR
\STATE \textit{// Output Boolean fading function}
\STATE Output the updated Boolean fading functions of all VNs (addition of all incoming messages).
\end{algorithmic}
\end{algorithm}

\begin{definition}[Block mapping]
A block mapping $\pi:\{0,\ldots,n-1\}\to\{0,\ldots,M-1\}$ assigns each VN
index $i$ (i.e., $v_i$) to a fading block. Under the Boolean fading model, the
channel LLR associated with $v_i$ has fading state $A_{\pi(i)}$.
\end{definition}

\begin{definition}[Boolean fading function]
\label{def:boolean_fading_function}
Under the Boolean approximation, each iterative-decoding message is represented by a Boolean function of the block-fading variables $\{A_0,\ldots,A_{M-1}\}$. The VN-to-CN, CN-to-VN, and a posteriori Boolean functions at iteration $\ell$ are denoted by $F_{v_i\to c_j}^{(\ell)}$, $F_{c_j\to v_i}^{(\ell)}$, and $F_i^{(\ell)}$, respectively. The final Boolean fading function of VN $v_i$ is defined as the fixed point of $F_i^{(\ell)}$, denoted by
$F_i \triangleq \lim_{\ell\to\infty} F_i^{(\ell)}.$
\end{definition} % fixed point 제외?

\begin{definition}[VN diversity]
\label{def:VN_diversity}
For a VN $v_i$, the VN diversity is defined as the diversity order of its final Boolean fading function $F_i$. If $v_i$ is a protograph VN, the same diversity order is assigned to all codeword bits obtained by lifting $v_i$.
\end{definition}

\begin{definition}[Code diversity]
\label{def:code_diversity}
For an LDPC code with information set $\mathcal{I}$, define
$F_c \triangleq \prod_{i \in \mathcal{I}} F_i,$
where $F_i$ denotes the fading function of the $i$-th information VN. The code diversity is defined as the diversity order of $F_c$, where successful BP decoding is identified with recovery of all information VNs.
\end{definition}

\begin{definition}[Full diversity]
\label{def:full_diversity}
A VN $v_i$ is said to achieve full diversity if
$F_i=\sum_{m=0}^{M-1} A_m,$
that is, if its VN diversity is $M$. An LDPC code is said to achieve full diversity if its code diversity is $M$.
\end{definition}

\section{Protograph Design Framework for Block-Fading and AWGN Channels}
The proposed framework for DA-GRP-LDPC code design consists of two stages: (i) defining a protograph template that guarantees full diversity over the BFC with $M=2$, and (ii) selecting protographs from the family induced by the template using a genetic algorithm, where RCA-DE serves as the fitness metric for AWGNC threshold optimization.

\subsection{Generalized Rootcheck Structure}
We briefly review the rootcheck structure of~\cite{Boutros10}, which achieves
full diversity over the BFC after one BP iteration.

\begin{definition}[Rootcheck~\cite{Boutros10}]
A CN $c_j$ is said to be a rootcheck for a VN $v_i$ if $\pi(i) = m$ for some $m \in \{0, \dots, M-1\}$ and, for some $n \neq m$, every $u \in \mathcal{N}_c(j) \setminus \{i\}$ satisfies $\pi(u) = n$.
\end{definition}

\begin{remark}
In this case, a VN $v_i$ connected to a rootcheck achieves full diversity after one BP iteration for $M=2$; such a VN is called a \emph{root node}~\cite{Boutros10}.
\end{remark}

Motivated by this property and the associated structural constraints, we propose
a \emph{generalized rootcheck} concept that provides diversity improvement over
multiple BP iterations. Unlike a rootcheck, it is message-based rather than purely block mapping-based, as illustrated in Fig.~\ref{fig:rootcheck_vs_generalized_rootcheck}.

\begin{definition}[Generalized rootcheck]
\label{def:gen_rootcheck}
A CN $c_j$ is called a generalized rootcheck for a VN
$v_i$ at iteration $\ell$ if $\pi(i)=m$ for
some $m\in\{0,\ldots,M-1\}$ and for some $n\neq m$, every other neighboring VN
$u\in\mathcal{N}_c(j)\setminus\{i\}$ satisfies $F_{v_u\to c_j}^{(\ell-1)}\in\Bigl\{A_n,\ \sum_{n'=0}^{M-1}A_{n'}\Bigr\}$.
In this case, we refer to $v_i$ as the root node of the generalized rootcheck.
\end{definition}

\begin{figure}[t]
\centering
\subfloat[Rootcheck\label{fig:rootcheck}]{%
  \includegraphics[width=0.21\textwidth]{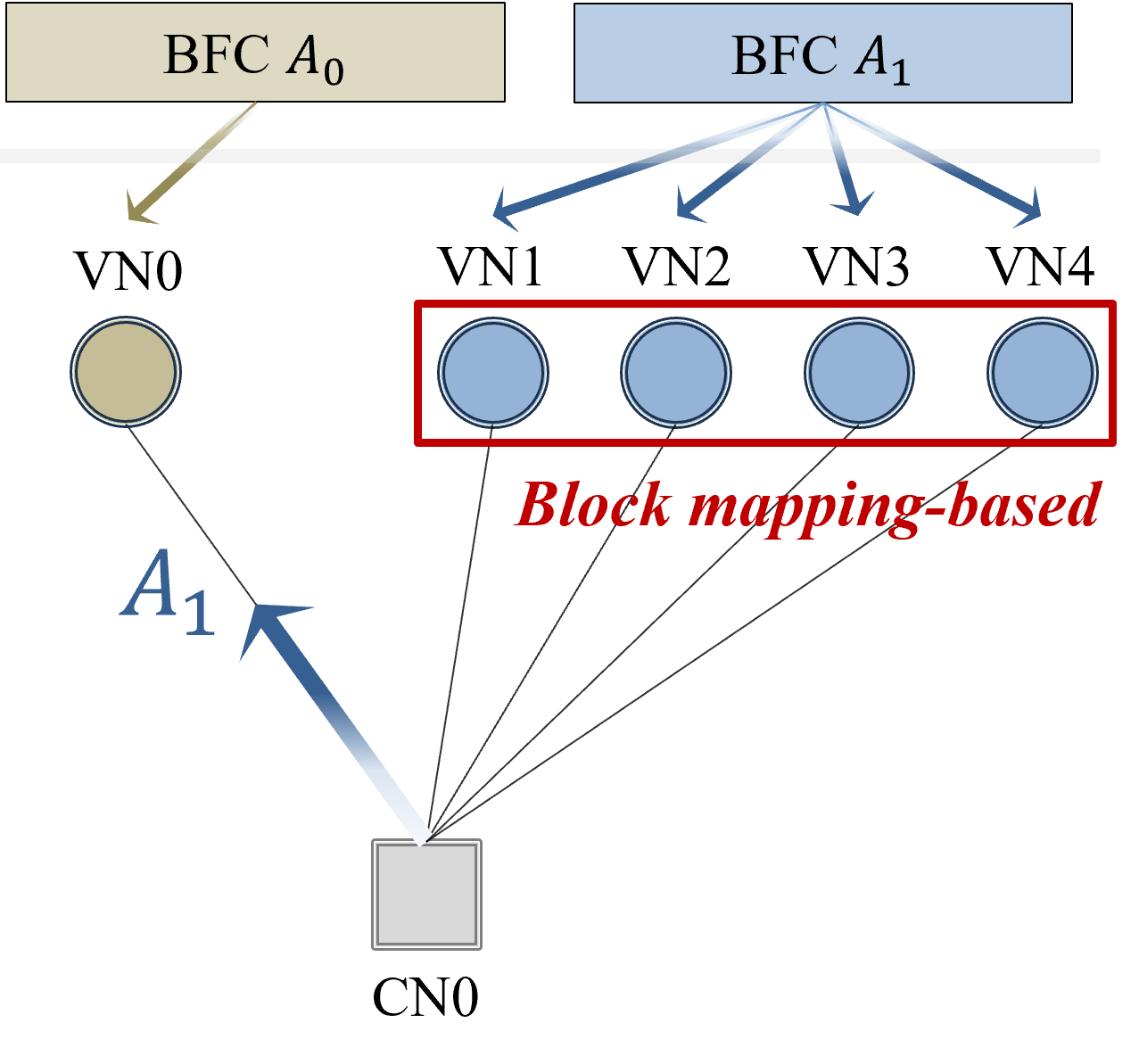}%
}
\hfil
\subfloat[Generalized rootcheck\label{fig:generalized_rootcheck}]{%
  \includegraphics[width=0.21\textwidth]{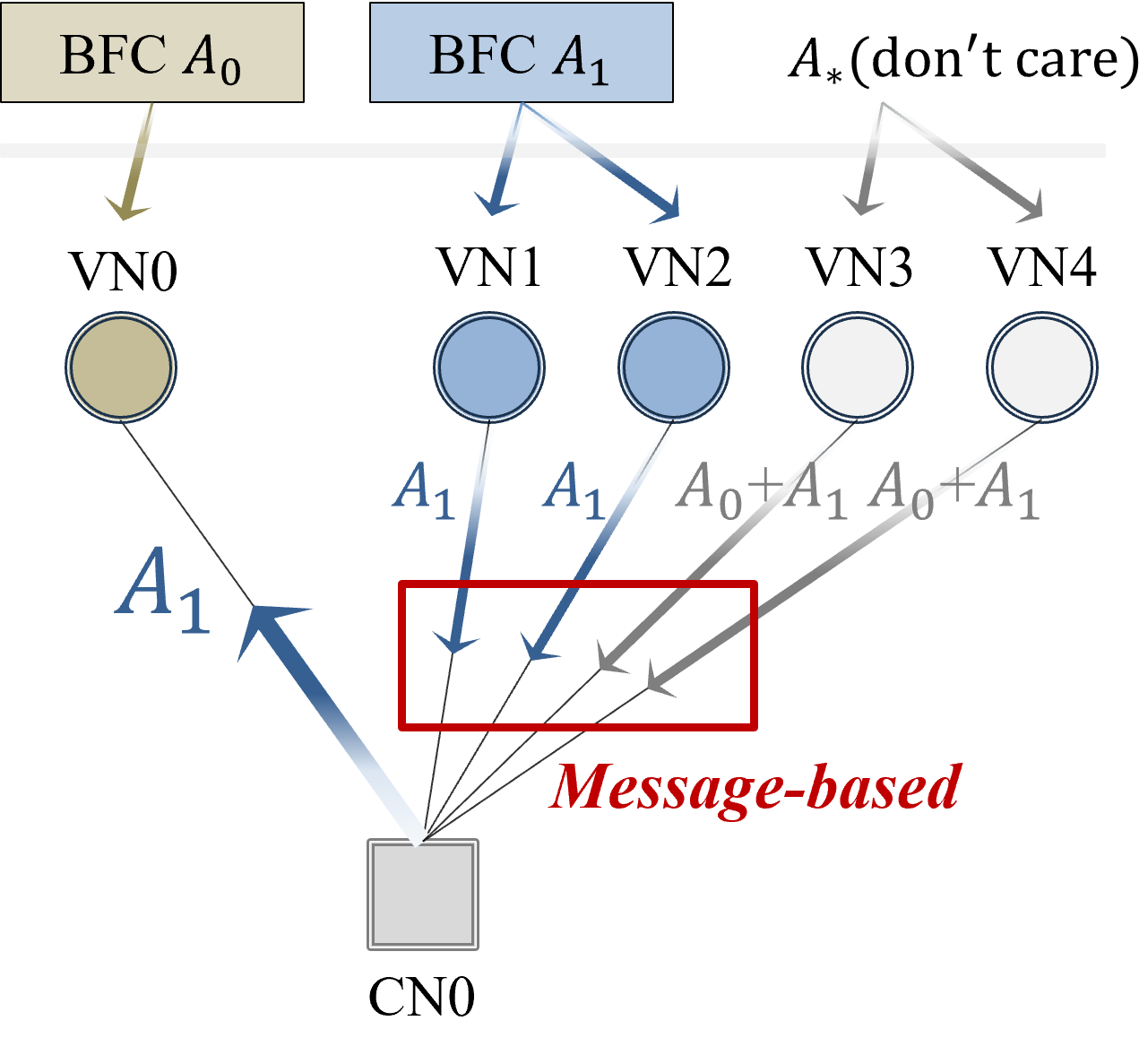}%
}
\caption{Comparison of rootcheck and generalized rootcheck structures.}
\label{fig:rootcheck_vs_generalized_rootcheck}
\vspace{-0.08in}
\end{figure}

The following proposition confirms that the existence of a generalized rootcheck increases the diversity order of the root node across decoding iterations for $M=2$.

\begin{proposition}
\label{prop:gen_rootcheck_div_increase}
For the BFC with $M=2$, if $v_i$ is connected to a generalized rootcheck at iteration $\ell$ and is the root node, then $v_i$ attains full diversity after iteration $\ell$.
\end{proposition}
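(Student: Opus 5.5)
The argument works directly with the Boolean CN/VN update rules of DivE (Fig.~\ref{fig:boolean-fading}). A CN-to-VN message is the AND (Boolean product) of the incoming VN-to-CN messages on the other edges. A VN message is the OR (Boolean sum) of the channel state $A_{\pi(i)}$ and the incoming CN-to-VN messages. For $M=2$, without loss of generality take $\pi(i)=m=0$, so that $n=1$ and the full-diversity function is $A_0+A_1$.

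\textbf{Step 1: the CN-to-VN message from the generalized rootcheck.} By Definition~\ref{def:gen_rootcheck}, every $u\in\mathcal{N}_c(j)\setminus\{i\}$ satisfies $F^{(\ell-1)}_{v_u\to c_j}\in\{A_1,\,A_0+A_1\}$. Hence
\[
F^{(\ell)}_{c_j\to v_i}=\prod_{u\in\mathcal{N}_c(j)\setminus\{i\}}F^{(\ell-1)}_{v_u\to c_j}.
\]
Every factor is either $A_1$ or $A_0+A_1$. By idempotence and absorption, $A_1(A_0+A_1)=A_1$ and $(A_0+A_1)^k=A_0+A_1$. So the product equals $A_1$ if at least one factor is $A_1$, and $A_0+A_1$ otherwise. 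In both cases $A_1 \le F^{(\ell)}_{c_j\to v_i}$ in the Boolean order. The degenerate case where $c_j$ has no other neighbours should be excluded or treated as yielding the constant $1$. That case also only helps.

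\textbf{Step 2: the a posteriori function of $v_i$.} By the output rule of Algorithm~\ref{alg:DivE}, $F^{(\ell)}_i$ is the OR of $A_0$ and all incoming CN messages, including $F^{(\ell)}_{c_j\to v_i}$. All Boolean fading functions are monotone and built from ANDs and ORs of $A_0,A_1$, so
\[
A_0+A_1\le A_0+F^{(\ell)}_{c_j\to v_i}\le F^{(\ell)}_i\le A_0+A_1.
\]
The upper bound holds because every message is a monotone function of $(A_0,A_1)$ without constants. Hence $F^{(\ell)}_i=A_0+A_1$.

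\textbf{Step 3: persistence across later iterations.} Definition~\ref{def:full_diversity} is stated for the limit $F_i$, so I must show that the value $A_0+A_1$ persists after iteration $\ell$. I would argue that DivE messages are monotonically nondecreasing in $\ell$. Each VN-to-CN message includes the channel term, and both updates are monotone maps. By induction, then, every message at iteration $\ell+1$ dominates the corresponding message at iteration $\ell$. Combined with the upper bound $A_0+A_1$, this forces $F^{(\ell')}_i=A_0+A_1$ for all $\ell'\ge \ell$, and therefore $F_i=A_0+A_1$.

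This monotonicity induction is the step I expect to be the main obstacle. It requires checking that the initialization ($F^{(0)}_{v\to c}=A_{\pi(v)}$, the smallest possible message) is dominated by the iteration-1 messages. Given that base case, the OR/AND monotonicity propagates the ordering to all later iterations.
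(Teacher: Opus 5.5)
Your proof is correct and follows the same route as the paper: the AND at the generalized rootcheck reduces, by absorption, to a message that dominates $A_1$, and the OR with the channel term $A_0$ then gives $A_0+A_1$. You are more careful than the paper's proof in three places, and all three refinements are sound. The paper asserts $F^{(\ell)}_{c_j\to v_i}=A_1$, implicitly assuming at least one incoming $A_1$, whereas you also handle the case where every input is $A_0+A_1$. The paper also never argues the upper bound, nor that $A_0+A_1$ persists up to the fixed point $F_i$ used in Definition~\ref{def:full_diversity}. Your monotonicity induction is valid and settles that last point, so it is not a real obstacle.
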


\begin{proof}
Assume $v_i$ is assigned to block~$0$, and let $c_j$ be a
generalized rootcheck adjacent to $v_i$ at iteration $\ell$ for $M=2$.
Under DivE, the CN update (AND) combines the incoming functions
from $\mathcal{N}_c(j)\setminus\{i\}$. By the generalized rootcheck condition,
these include the opposite fading variable $A_1$ (and possibly $A_0+A_1$, i.e.,
a full diversity expression). Since $A_1\cdot A_1=A_1$ and
$(A_0+A_1)\cdot A_1=A_1$ in Boolean algebra, we obtain
$F_{c_j\to v_i}^{(\ell)}=A_1$. The subsequent VN update (OR) yields
$F_i^{(\ell)}=A_0+A_1$, i.e., $v_i$ attains full diversity.
\end{proof}

For $M=2$, the above result implies a sufficient condition for full diversity.

\begin{proposition}
\label{prop:all_gen_rootcheck_full_M=2}
For the BFC with $M=2$, if every information VN becomes a root node at some decoding iteration, then the LDPC code achieves full diversity.
\end{proposition}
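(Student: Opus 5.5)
The plan is to reduce the statement to Proposition~\ref{prop:gen_rootcheck_div_increase}, applied to each information VN separately, and then to combine the VN-level results through the product in Definition~\ref{def:code_diversity}. Two ingredients are needed: full diversity, once reached, must persist through later iterations and in the fixed point $F_i$; and a product of full-diversity functions must again have full diversity.

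Fix an information VN $v_i\in\mathcal{I}$. By hypothesis there is an iteration $\ell_i$ at which $v_i$ is the root node of some generalized rootcheck. Proposition~\ref{prop:gen_rootcheck_div_increase} then gives $F_i^{(\ell_i)}=A_0+A_1$.

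Next I show that $F_i^{(\ell)}=A_0+A_1$ for every $\ell\ge\ell_i$, so that the limit $F_i$ of Definition~\ref{def:boolean_fading_function} equals $A_0+A_1$. For $M=2$, every Boolean fading function is a monotone function of $(A_0,A_1)$ built from OR and AND starting from the channel variables. Each such function satisfies $F\le A_0+A_1$, since it vanishes at $A_0=A_1=0$. A posteriori, $F_i^{(\ell)}$ is the OR of the channel variable $A_{\pi(i)}$ and the incoming check messages.

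The key step, and the one I expect to be the main obstacle, is to argue that the check message $F^{(\ell)}_{c_j\to v_i}$ from the generalized rootcheck does not degrade in later iterations. I will do this by monotonicity of DivE in $\ell$. The initial VN-to-CN messages $A_{\pi(u)}$ are dominated by the messages at iteration~1, because VN updates OR onto the channel value. Since the AND and OR updates are monotone, induction gives $F^{(\ell+1)}\ge F^{(\ell)}$ for every message. Hence $F_i^{(\ell)}$ is nondecreasing in $\ell$ and bounded above by $A_0+A_1$. Once it reaches $A_0+A_1$ it stays there, and the limit exists because the lattice of such functions is finite. If the monotonicity induction needs care at the first iteration, I would instead simply note that the generalized rootcheck condition at $\ell_i$ implies the same condition at every later iteration, using the same monotonicity of the VN-to-CN messages toward $\{A_n, A_0+A_1\}$.

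Finally, $F_c=\prod_{i\in\mathcal{I}}F_i=(A_0+A_1)^{|\mathcal{I}|}=A_0+A_1$ by idempotence of AND. Its diversity order is therefore $M=2$, and by Definition~\ref{def:full_diversity} the code achieves full diversity.
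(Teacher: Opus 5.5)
Your proof is correct and follows the same route as the paper: apply Proposition~\ref{prop:gen_rootcheck_div_increase} to each information VN, then conclude via Definitions~\ref{def:code_diversity} and~\ref{def:full_diversity}. Your monotonicity argument and the idempotence step $F_c=(A_0+A_1)^{|\mathcal{I}|}=A_0+A_1$ make explicit what the paper's two-line proof leaves implicit: DivE messages are nondecreasing in $\ell$ and bounded by $A_0+A_1$, so full diversity persists to the fixed point $F_i$.
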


\begin{proof}
By Proposition~\ref{prop:gen_rootcheck_div_increase}, each information VN achieves full diversity. Thus, the code achieves full diversity by Definition~\ref{def:full_diversity}.
\end{proof}

\subsection{Protograph Family Achieving Full Diversity} % Protograph family?

Using the generalized rootchecks, we introduce a diversity-aligned (DA) \emph{protograph template} for constructing protographs that achieve full diversity over the BFC with $M=2$. Fig.~\ref{fig:protograph_template} illustrates the proposed template for $(n,k)=(32,16)$.

\begin{figure}[t]
\centering
\includegraphics[width=0.80\linewidth]{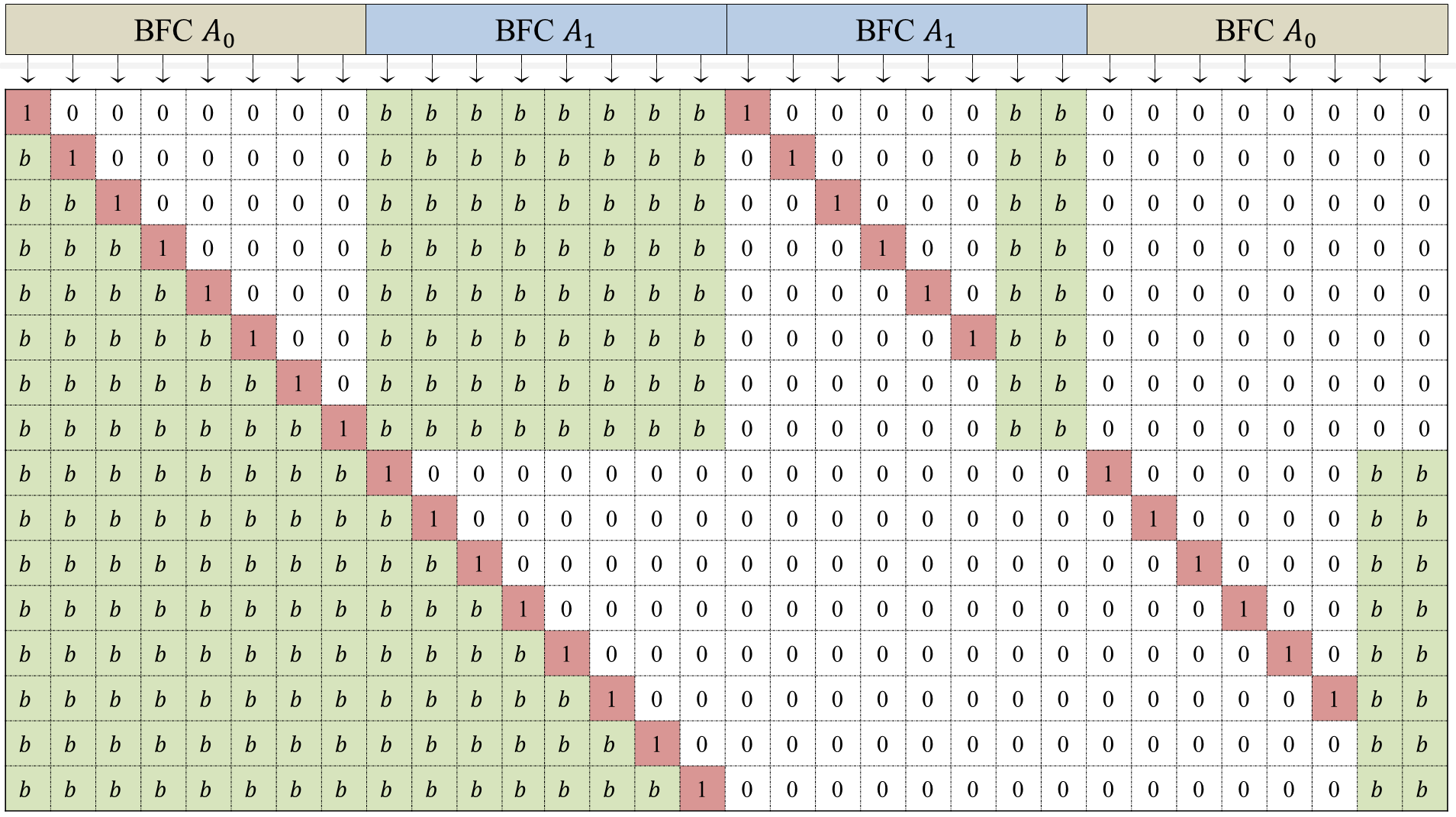}
\caption{Proposed protograph template for the two-block BFC with $n=32$ and $k=16$, achieving full diversity. Entries labeled $b\in\{0,1\}$ are designable.}
\label{fig:protograph_template}
\vspace{-0.09in}
\end{figure}

\begin{definition}[Protograph template]
\label{def:pd_template}
A protograph template is specified by $\mathcal{T}$ and induces
a constrained \emph{protograph family} $\mathcal{H}(\mathcal{T})$. For each
$\mathbf{H}_p=[h_{ji}]\in\mathcal{H}(\mathcal{T})$, the template fixes a subset
of entries of $\mathbf{H}_p$ to $0$ or $1$ and specifies the designable index set
$\mathcal{D}$, where $(j,i)\in\mathcal{D}$ implies $h_{ji}=b$ with $b\in\{0,1\}$.
We focus on the rate-$1/2$ case ($k=n/2$), the maximum rate achieving full diversity for the BFC with $M=2$ under the Singleton-like bound. Here, $v_0,\ldots,v_{n/2-1}$ are information VNs and $v_{n/2},\ldots,v_{n-1}$ are parity VNs.
\end{definition}

\begin{theorem}
\label{thm:protograph_template_proof}
Consider the protograph template $\mathcal{T}$ of Definition~\ref{def:pd_template} and the induced protograph family
$\mathcal{H}(\mathcal{T})$. Suppose that the following conditions hold:
\begin{itemize}[label=-, labelwidth=1em, labelsep=0.5em, leftmargin=!]
    \item \textbf{Block mapping:} %\boldmath$(\pi_{\mathcal T})$:}
    $v_i$ is assigned to block~$0$ for
    $i\in[0,n/4)\cup[3n/4,n)$, and to block~$1$ for $i\in[n/4,3n/4)$.

    \item \textbf{Weight constraints:} 
    $\sum_{i:(j,i)\in\mathcal{D}} h_{ji} \ge 1$ for each $j \in \{n/4-2,n/4-1,n/2-2,n/2-1\}$, and
    $\sum_{j} h_{ji} \ge 1$ for each $i \in \{3n/4-2, 3n/4-1, n-2, n-1\}$.
    
\end{itemize}

Under these conditions, any protograph $\mathbf{H}_p\in\mathcal{H}(\mathcal{T})$
achieves full diversity over the two-block BFC under iterative BP decoding by iteration $n/4$.
\end{theorem}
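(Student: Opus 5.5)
The argument reduces to Proposition~\ref{prop:all_gen_rootcheck_full_M=2}. It suffices to show that every information VN $v_0,\ldots,v_{n/2-1}$ becomes the root node of a generalized rootcheck at some iteration $\ell\le n/4$. The template of Fig.~\ref{fig:protograph_template} is evidently built as a staircase (dual-diagonal-like) chain linking information VNs in one block to parity VNs in the other. I would therefore argue by induction on the iteration index along this chain, using the fixed entries of $\mathcal{T}$ and leaving the designable entries in $\mathcal{D}$ arbitrary, apart from the stated weight constraints.

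\emph{Base step.} Under the block mapping, $v_0,\ldots,v_{n/4-1}$ and the parity VNs $v_{3n/4},\ldots,v_{n-1}$ lie in block~$0$, while the other half lies in block~$1$. First I would read off from the fixed part of $\mathcal{T}$ that certain CNs (the first rows of each staircase) are ordinary rootchecks. Each such CN has exactly one neighbor in one block and all remaining fixed neighbors in the other block. Its designable entries must lie only in columns of the opposite block, so any choice of $b$ preserves the rootcheck property. By Proposition~\ref{prop:gen_rootcheck_div_increase}, the corresponding VNs have $F^{(1)}=A_0+A_1$ after iteration~$1$. These are typically one information VN and one parity VN per staircase.

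\emph{Inductive step.} Suppose that after iteration $\ell-1$ a set $S_{\ell-1}$ of VNs has full-diversity a posteriori functions. I need the analogous statement for the extrinsic messages $F_{v_u\to c_j}^{(\ell-1)}$. This holds whenever $v_u$ has full diversity through some CN other than $c_j$; the template's column weights (at least $2$ on the chain columns) should guarantee this. Then the next row $c_j$ of the staircase has one VN not yet in $S_{\ell-1}$, say in block $m$. Every other neighbor either lies in block $n\neq m$, so that its message contains $A_n$ and, by monotonicity of OR, is either $A_n$ or $A_0+A_1$, or lies in $S_{\ell-1}$ with message $A_0+A_1$. Hence $c_j$ is a generalized rootcheck in the sense of Definition~\ref{def:gen_rootcheck}, and Proposition~\ref{prop:gen_rootcheck_div_increase} gives $S_\ell\supseteq S_{\ell-1}\cup\{\text{new root}\}$. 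Each staircase has length $n/4$, and the staircases run in parallel, so after $n/4$ iterations all information VNs (and the chain parity VNs) are covered.

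\emph{Role of the weight constraints and the main obstacle.} The hardest part is the last two rows and columns of each staircase: rows $n/4-2,n/4-1,n/2-2,n/2-1$ and columns $3n/4-2,3n/4-1,n-2,n-1$. There the fixed part alone does not close the chain. A CN with no designable edge would be a degree-deficient row whose only neighbors belong to one block, and a parity column with zero weight would be disconnected. I would check row by row that requiring at least one designable $1$ in these rows supplies an opposite-block neighbor that is already in $S_\ell$, which makes the row a generalized rootcheck for the final information VNs. Likewise, the column constraints ensure that the final parity VNs carry an extrinsic message of the required form rather than just their own channel value. The delicate point is verifying the extrinsic-versus-a-posteriori distinction at these boundary positions. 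An edge from $v_u$ to $c_j$ that is the only source of $v_u$'s full diversity would not propagate, so I would confirm from the fixed entries of $\mathcal{T}$ that each boundary VN has a second, independent rootcheck path. Once this is established, Proposition~\ref{prop:all_gen_rootcheck_full_M=2} completes the proof.
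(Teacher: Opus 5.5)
Your proposal matches the paper's proof: the rootchecks $c_0$ and $c_{n/4}$ give $v_0$ and $v_{n/4}$ full diversity at $\ell=1$, then $c_{\ell-1}$ and $c_{n/4+\ell-1}$ act as generalized rootchecks at iteration $\ell$ via Proposition~\ref{prop:gen_rootcheck_div_increase}, giving two information VNs per iteration and, through Proposition~\ref{prop:all_gen_rootcheck_full_M=2}, full diversity by iteration $n/4$. Your extra worries are lighter than you suggest: an earlier same-block information VN $v_u$ got full diversity through its own row $c_u\neq c_j$, so its extrinsic message to $c_j$ is already $A_0+A_1$, and an opposite-block neighbor's bare channel value $A_n$ is already admissible, so securing that is not the role of the column constraints (the paper's proof does not single the weight constraints out).
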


\begin{proof}
It is guaranteed that $v_{\ell-1}$ and $v_{n/4+\ell-1}$ achieve full diversity by decoding iteration $\ell$ due to the recursive construction of generalized rootchecks in the template. This is shown via DivE by tracking the generation of generalized rootchecks across iterations.

\begin{enumerate}[label=\arabic*)]
\item \textit{Starting with a pair of rootchecks ($\ell=1$):}
By the fixed entries imposed by the protograph template, $v_0$ and $v_{n/4}$ are the
root nodes of rootchecks $c_0$ and $c_{n/4}$, respectively. Hence, they are guaranteed to achieve full diversity after the first BP iteration.

\item \textit{Propagation of full diversity via the generalized rootcheck structure ($\ell\ge 2$):}
For $\ell=2,\ldots,n/4$, the protograph template constraints guarantee that
$v_{\ell-1}$ and $v_{n/4+\ell-1}$ are root nodes of the generalized rootchecks
$c_{\ell-1}$ and $c_{n/4+\ell-1}$, respectively. The other VNs neighboring these CNs depend either on the opposite fading variable alone or on full diversity expressions obtained in previous iterations.
Therefore, by Proposition~\ref{prop:gen_rootcheck_div_increase}, these VNs achieve full diversity by iteration $\ell$.

\item \textit{Full diversity guarantee:}
Since two information VNs are guaranteed to achieve full diversity at each iteration, all information VNs achieve full diversity by iteration $n/4$. This stepwise guarantee follows from the lower triangular structure of the protograph template. Hence, by Proposition~\ref{prop:all_gen_rootcheck_full_M=2}, any $\mathbf{H}_p\in\mathcal{H}(\mathcal{T})$ achieves full diversity under iterative BP decoding.
\end{enumerate}

\end{proof}

Theorem~\ref{thm:protograph_template_proof} guarantees full diversity over the two-block BFC for all protographs in $\mathcal{H}(\mathcal{T})$.
Within this constrained family, we can search for a protograph with improved coding gain.

\begin{remark}
The degree-1 VNs in the proposed template improve finite-length performance~\cite{Divsalar09}, reduce the search space, and facilitate the generalized rootcheck condition.
\end{remark}

\subsection{Genetic Algorithm-Based Protograph Optimization}

\begin{figure}[t]
\centering
\includegraphics[width=0.75\linewidth]{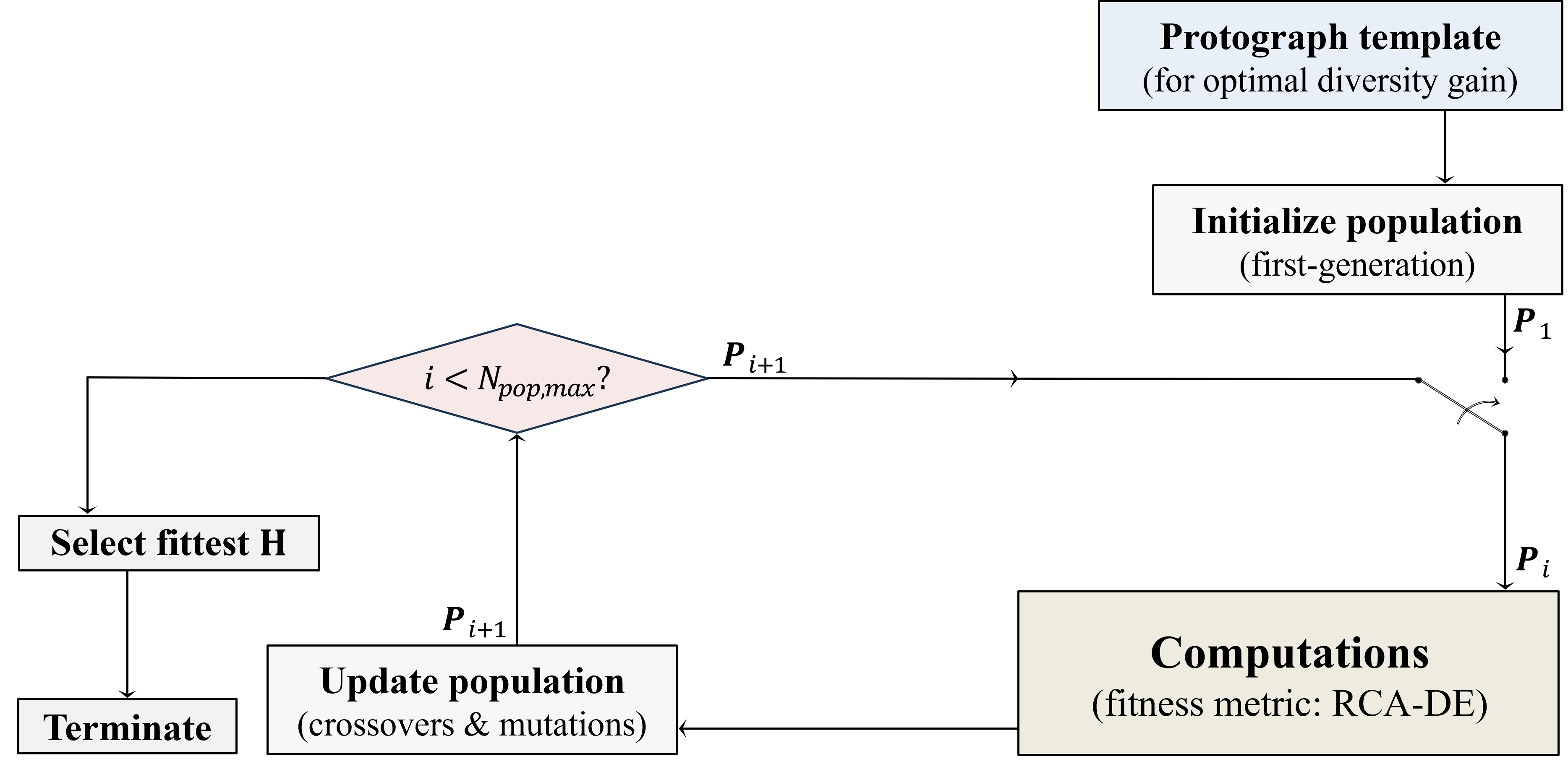}
\caption{Flowchart of GA-based protograph optimization for AWGNCs.}
\vspace{-0.09in}
\label{fig:ga_flow}

\end{figure}

%We pursue to search for a good protograph instance among the template-induced protograph family. 
We seek to find a good protograph instance among the template-induced protograph family.
The search over the protograph family is a constrained binary combinatorial optimization problem.
Because the RCA-DE threshold defines a highly nonconvex and derivative-inaccessible objective over this discrete space, we employ a genetic algorithm (GA)~\cite{Holland75,Elkelesh19} for efficient global exploration.
Fig.~\ref{fig:ga_flow} summarizes the proposed GA-based optimization procedure.

\begin{enumerate}[label=\arabic*)]
\item \textit{Population initialization:}\\
The designable entries $(j,i)\in\mathcal{D}$ are randomly initialized with
$b\in\{0,1\}$ to generate an initial population of a predefined size.

\item \textit{Fitness evaluation (via RCA-DE):}\\
For each candidate $\mathbf{H}_p\in\mathcal{H}(\mathcal{T})$ in generation $G_i$,
RCA-DE is performed, and the resulting decoding threshold is used as the fitness
metric for AWGNC performance. The population is then ranked by fitness.

\item \textit{Population update:}\\
The next generation $G_{i+1}$ is formed through elite retention, crossover, and
mutation. The top $T$ candidates in $G_i$ (set to $T=5$ in our search) are retained and used as parents.
Crossover produces $8\binom{T}{2}$ offspring in total: $6\binom{T}{2}$ from VN-based crossover and $2\binom{T}{2}$ from CN-based crossover. In the former, VNs are partitioned into four groups by block mapping and information/parity type; in the latter, CNs are partitioned according to whether they act as generalized rootchecks for block~0 or block~1 VNs.
Mutation is applied only to $\mathcal{D}$ by adding, deleting, or swapping a single designable entry $h_{ji}$. 

\item \textit{Validity check:}\\
Each candidate is examined to ensure that the parity submatrix of $\mathbf{H}_p$ is full rank to enable systematic code construction\footnote{The results in this version were obtained after enforcing this condition.} and that every information VN is connected to multiple CNs not adjacent to any degree-1 VN. Candidates that do not satisfy these conditions are discarded.

\end{enumerate}

\noindent

The GA parameters were chosen empirically, as further increases yielded only marginal performance improvement at substantially higher computational cost.
For finite-length code construction, the optimized protograph is lifted using the progressive edge-growth (PEG) technique~\cite{Hu05}. 

\begin{table}[t]
\caption{Protograph thresholds and gap to capacity over BI-AWGNCs.}
\label{tab:rcade_gap_thr}
\centering
\renewcommand{\arraystretch}{1.1}
\setlength{\tabcolsep}{6pt}
\begin{tabular*}{0.8\columnwidth}{@{\hspace{4pt}\extracolsep{\fill}}lcccc@{\hspace{4pt}}}
\toprule
$R=1/2$ & Proposed & 5G-NR & IRP2 & RP \\
\midrule
$\gamma_{\mathrm{th}}$ [dB]  & 0.350 & 0.440 & 0.730 & 1.102 \\
$\Delta_{\mathrm{cap}}$ [dB] & 0.163 & 0.253 & 0.543 & 0.915 \\
\bottomrule
\end{tabular*}
\vspace{-0.09in}
\end{table}

\section{Numerical Results}

In this section, we evaluate the proposed DA-GRP-LDPC codes over the BFC with $M=2$ and the AWGNC, using root-protograph (RP) LDPC codes~\cite{Boutros10}, improved root-protograph 2 (IRP2) LDPC codes~\cite{Fang15}, and 5G-NR LDPC codes~\cite{NR38212} as benchmarks. The IRP2 codes serve as advanced RP baselines with enhanced coding gain.

Fig.~\ref{fig:performance_BFC} shows the BLER performance over the two-block BFC for $(N,R)=(16896,1/2)$. The proposed code, RP LDPC code, and IRP2 LDPC code exhibit full diversity behavior with comparable high-SNR slopes, whereas the 5G-NR baseline with arbitrary block mapping shows a shallower slope due to diversity loss. Beyond achieving full diversity, the proposed code outperforms both the RP and IRP2 baselines, indicating additional coding gain. Consequently, it achieves state-of-the-art BLER performance among LDPC codes over the two-block BFC, operating within 0.8~dB of the outage limit at a block length of 16,896.

Fig.~\ref{fig:performance_AWGNC} compares the bit error rate (BER) and BLER performance over the binary-input AWGN channel (BI-AWGNC). The code lengths considered in Fig.~\ref{fig:performance_AWGNC_1} and Fig.~\ref{fig:performance_AWGNC_2} are $N=7744$ and $N=16896$, respectively, with a code rate of $R=1/2$. At both lengths, the proposed DA-GRP-LDPC codes outperform all baselines, demonstrating that the RCA-DE-guided GA search yields additional coding gain within the template design space. 
This is supported by Table~\ref{tab:rcade_gap_thr}, where the proposed protograph exhibits the smallest gap to capacity $\Delta_{\mathrm{cap}}$ among the listed BI-AWGNC thresholds $\gamma_{\mathrm{th}}$.
Although the GA search can find protographs with thresholds as low as 0.301~dB, we select one with a threshold of 0.350~dB to avoid an error floor in finite-length performance while preserving the full diversity guarantee over the BFC with $M=2$.

\begin{figure}[t]
\centering
\includegraphics[width=0.55\linewidth]{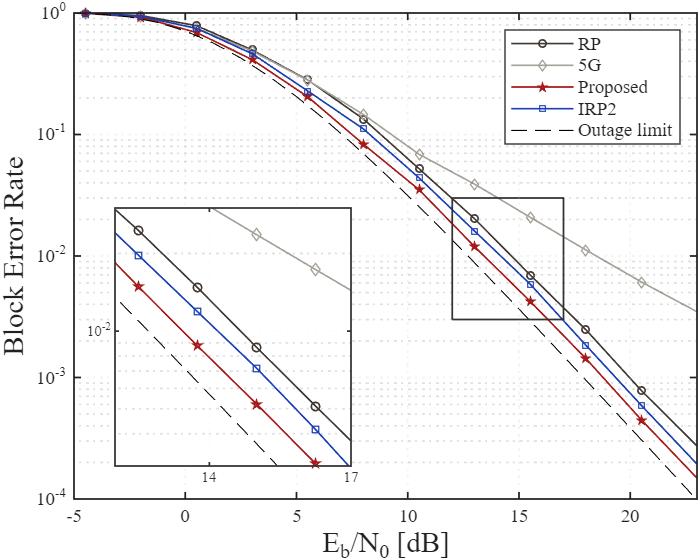}
\caption{BLER performance over the two-block BFC for the 5G-NR, RP, IRP2, and proposed DA-GRP-LDPC codes with $(N,R)=(16896,1/2)$. The outage limit represents the information-theoretic BLER lower bound for the two-block BFC under BPSK modulation. PEG-based lifting is applied with $Z=1056$ for the RP code, $Z=528$ for the IRP2 code, and $Z=384$ for the 5G-NR and proposed codes.}
\label{fig:performance_BFC}
\vspace{-0.08in}
\end{figure}

\begin{figure}[t]
\centering
\subfloat[$N=7744$, $R=1/2$\label{fig:performance_AWGNC_1}]{%
  \includegraphics[width=0.22\textwidth]{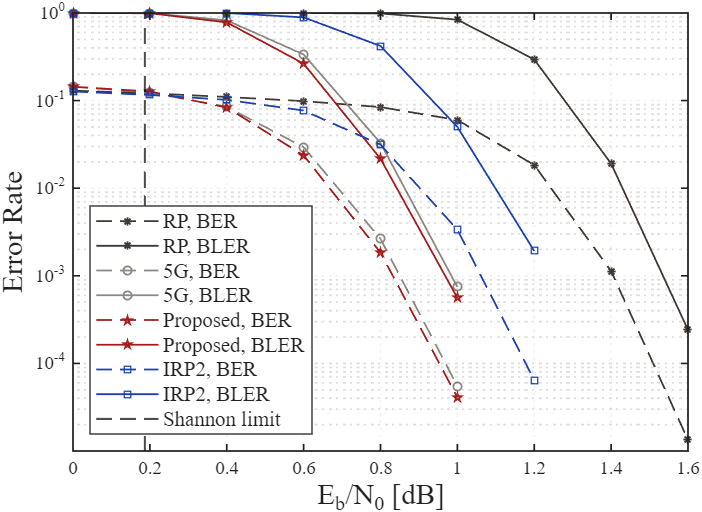}%
}
\hfil
\subfloat[$N=16896$, $R=1/2$\label{fig:performance_AWGNC_2}]{%
  \includegraphics[width=0.22\textwidth]{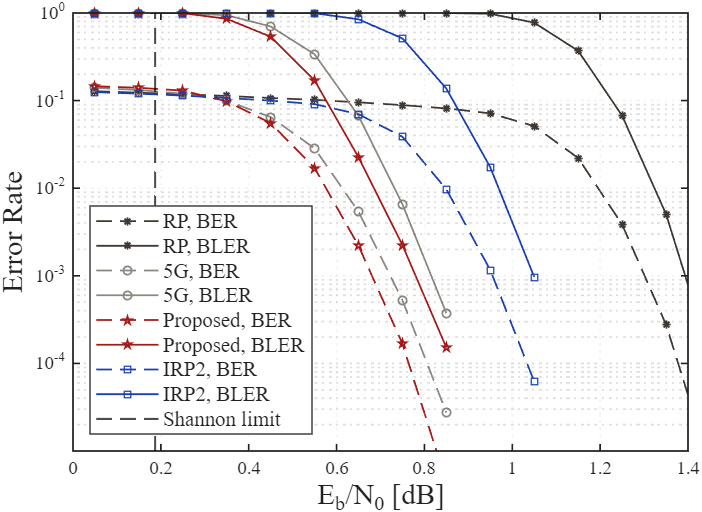}%
}
\caption{BER and BLER performance over the BI-AWGNC for the 5G-NR, RP, IRP2, and proposed DA-GRP-LDPC codes at $R=1/2$: (a) $N=7744$; (b) $N=16896$.} 
%The Shannon limit denotes the theoretical bound.}

\label{fig:performance_AWGNC}
\vspace{-0.09in}
\end{figure}

\section{Conclusion}
We proposed a protograph-based LDPC code design framework that achieves full diversity
over the BFC with $M=2$ while maintaining strong coding gain in both the AWGNC and the BFC.
Full diversity is guaranteed for our DA protograph family under iterative decoding. 
Coding gain is then optimized within the DA protograph family using an RCA-DE-guided 
GA targeting low protograph thresholds for AWGNC performance.  
Numerical results confirm that the resulting LDPC codes exhibit full diversity over the two-block BFC and achieve better BLER performance over AWGNCs across multiple code lengths
compared with the considered baselines. 

The proposed design inherently offers scalability to $M>2$, which is currently being investigated as ongoing work. Furthermore, future directions include exploring applications to modified codes and HARQ.

\end{document}